\newcommand{\mini}{\textrm{minimize}}
\newcommand{\sign}{\textrm{sign}}
\newcommand{\adapt}[1]{\textcolor{black}{#1}}
\begin{document}

\title{Estimating Individualized Optimal Combination Therapies through Outcome Weighted Deep Learning Algorithms}%\protect\thanks{This is an example for title footnote.}}

\author[1]{MUXUAN LIANG*}

\author[2]{TING YE}

\author[3]{HAODA FU*}

\authormark{Liang, Ye and Fu}

\address[1,2]{\orgdiv{Department of Statistics}, \orgname{University of Wisconsin-Madison}, \orgaddress{\state{Wisconsin}, \country{U.S.A.}}}

\address[3]{\orgname{Eli Lilly and Company}, \orgaddress{\state{Indiana}, \country{U.S.A.}}}

\corres{*Correspondence to: Haoda Fu, Eli Lilly and Company, IN,
	U.S.A.
	\email{fu\_haoda@lilly.com}\\	
	*Correspondence to: Muxuan Liang, University of Wisconsin-Madison, WI,
	U.S.A.
	\email{mliang9@wisc.edu}}

%\presentaddress{This is sample for present address text this is sample for present address text}

\abstract[Summary]{ With the advancement in drug development, multiple treatments are available for a single disease. Patients can often benefit from taking multiple treatments simultaneously. For example, patients in Clinical Practice Research Datalink (CPRD) with chronic diseases such as type 2 diabetes can receive multiple treatments simultaneously. Therefore, it is important to estimate what combination therapy from which patients can benefit the most. However, to recommend the best treatment combination is not a single-label but a multi-label classification problem.  In this paper, we propose a novel outcome weighted deep learning algorithm to estimate individualized optimal combination therapy. The fisher consistency of the proposed loss function under certain conditions is also provided. In addition, we extend our method to a family of loss functions, which allows adaptive changes based on treatment interactions. We demonstrate the performance of our methods through simulations and real data analysis.}

\keywords{Deep learning; Individualized treatment recommendation; Multi-label classification; Outcome weighted learning; Precision medicine}

\maketitle

\section{Introduction}\label{sec1}

In recent years, precision medicine has become an important topic in both industry and academia. It aims to find a mechanism of treatment assignment such that the patient can benefit the most. A lot of researchers have conducted vast investigations through different approaches. For example, the framework of outcome weighted learning is proposed to identify the individualized treatment rule (ITR) in \cite{Zhao2012}. Alternatively, Zhang et al. \cite{Zhang2012a} \cite{Zhang2012b} propose another general framework of estimating optimal treatment regimes with robustness from the perspective of classification. Fu, Zhou and Faries \cite{Fu2016} develop a comprehensive binary search approach, which is easy to interpret and apply in clinical study. Chen \cite{Chen2017} further extends the outcome weighted learning to other models and loss functions. However, these methods only focus on binary treatment recommendation. To handle multiple treatments, Zhang et al. \cite{Zhang2017} develop a multi-category outcome weighted learning approach using angle based classifiers.  
Their model essentially assumes that each patient takes one out of multiple treatments. However, in reality, patients can benefit from taking multiple treatments simultaneously. Therefore, it is of great interest to develop methods for precision medicine in the context of combination therapies.

The adoption of combination therapies in medication is inevitable, especially in chronic diseases. For example, patients with type 2 diabetes often receive multiple medications, because single treatment may be insufficient to effectively control the blood glucose level. Hence, medications exerting their effects through different mechanisms are needed. For instance, DPP4 increases incretin level, which inhibits glucagon release. Sulfonylurea increases insulin release from $\beta$-cell in pancreas. DDP4 and Sulfonylurea function through different biological pathways, while serving the same purpose of reducing blood glucose. Thus, we can expect better control in blood glucose when taking both medications. In addition, diabetes may also cause other complications which require additional medications. Due to these two reasons, patients with type 2 diabetes are very likely to receive a combination of multiple treatments, and so is the case in other chronic diseases such as cancer, and chronic heart failure (CHF).

However, existing algorithms may not be adequate to precision medicine with combination therapies. Algorithms for multi-class classification are not scalable with the increasing number of treatments, that is, if we have $K$ treatments, there are $2^K$ possible combinations, which implies at least $2^K-1$ classifiers. This is the consequence of treating each combination individually \adapt{without considering the underlying structures among those treatments}. For example, DDP4 and Sulfonylurea can reduce A1c by $0.5\%$ and $1.0 \%$, respectively. \adapt{If the treatment effects of DDP4 and Sulfonylurea are additive, combining these two medications together can reduce A1c by $1.5\%$ . If small interaction exists, the reduction is expected to lie between $1.0\%$ and $2.0\%$.}  {\adapt Our proposed method is able to leverage this information to improve the learning efficiency, which will be elaborated in Section \ref{sec5} and \ref{sec6}. }%To be more specifically, our method is consistent under the assumption of additive treatment effect and small interactions. }

Different from the traditional supervised learning, the estimand is the optimal treatment assignment which is not directly observed from the patients. For example, in a classification problem, the correct label $Y$ is observed for each observed covariates $X$. Similarly, in a regression problem, an outcome $Y$ is also observed for each observed covariates $X$. In our treatment assignment problem, information about optimal treatment assignment rule is only available indirectly through the outcome due to the fact that only one potential outcome can be observed. In this paper, we extend deep learning algorithms, and propose an outcome weighted loss function to estimate optimal treatment assignment rule for combination therapies. 

Our paper is organized in the following manner. Section \ref{sec2} provides a brief review to outcome weighted learning and multi-label classification problem. It explains the necessity of using multi-label classification rather than multi-class classification, and also {\adapt addresses some requirements that a good method needs to achieve}. Section \ref{sec3} incorporates multi-label classification with outcome weighted learning, and illustrates basic properties of our proposed method. Section \ref{sec4} further provides theoretical justifications on the fisher consistency under general cases and a special case with additive treatment effect and small interactions. Section \ref{sec5} and \ref{sec6} show the advantages of our method through simulations and real data analysis. In section \ref{sec7}, our method is further extended to a family of loss function which can be adaptive to the treatment interactions. We also discuss other possible future extensions in section \ref{sec7}.

\section{Review}\label{sec2}
\label{review:multi-label}
In this section, we will review some literature related to precision medicine and multi-label classification. In the meantime, the keys to solve this problem are identified.

Precision medicine has been a hot topic for years. On the one hand, outcome weighted learning proposed in \cite{Zhao2012} is one of the most popular method. Outcome weighted learning finds the optimal decision rule that maximizes the conditional expectation of the outcome given the decision rule. Let $X$ be the covariates, function $D(\cdot)$ be the decision rule, which is a mapping from space of covariates $\mathcal{X}$ to the space of treatment $\mathcal{A}$. The conditional expectation of the outcome $R\in \mathbb{R}$ given treatment assignment $D(X)$ can be written as:
\begin{equation}\label{eq:owl}
E^{D}(R)=E\left[\frac{R}{\pi_A}I\{A=D(X)\}\right],
\end{equation}
where $\pi_a=\Pr(A=a|X)$, where $A$ is the random variable of treatment assignment and $a\in \mathcal{A}$ is a treatment. In this framework, searching for the best treatment assignment $D$ to maximize $E^{D}(R)$ is converted into a classification problem. {\adapt Similar to the multi-class classification problem, the outcome weighted learning framework \eqref{eq:owl} can also be applied to multiple treatments problems using angel based learning \cite{Zhang2017}.} On the other hand, researchers also propose another general framework in \cite{Zhang2012a,Zhang2012b}, which estimates the contrast function of the treatment effect directly. The proposed methods in \cite{Zhang2012a} and \cite{Zhang2012b} also enjoy the advantage of double robustness which allows for misspecification on either potential outcome model or propensity score model. Chen \cite{Chen2017} further extends the ideas in \cite{Zhao2012} and \cite{Zhang2012b} to more general loss functions.

However, different from traditional multi-class classification problem where each patient is assigned to one of the treatments, multi-label classification problem allows patients to be assigned to a combination of multiple treatments. A naive implementation called Label Powerset (LP) transforms a multi-label classification problem with $K$ treatments (or classes) into a multi-class classification problem with $2^K$ classes, therefore, the dimension increases exponentially with the number of treatment $K$. To avoid the curse of dimensionality and to increase efficiency, two strategies are often adopted. The first strategy is Binary Relevance (BR) \cite{Luaces2012}. It transforms the original problem to a $K$ independent binary classification problem. For example, suppose $\mathcal{A}=\{a_1,a_2,\cdots,a_K\}$, BR type of methods build $K$ classifiers to individually and independently decide whether $a_k$, $k=1,\cdots,K$, should be adopted. % whether $a_2$ should be adopted, and $a_3$, respectively. 
In this case, assumed linear classification rule with $p$-dimensional covariates, at least $Kp$ coefficients need to be estimated, which is much smaller than that of LP. However, it is broadly criticized by its independent assumption on treatments (or classes). %To compensate on lack of interactions in BR, 
Therefore, ranking by pairwise comparison \cite{Hüllermeier2008} is proposed. {\adapt However, this method only provides a ranking of the treatments. It does not provide any `zero' point to separate treatments which are beneficial from those are harmful.}

Besides those attempts to convert the multi-label classification problem into other problems,
%transfer a multi-label classification problem into a multi-class classification problem, 
an alternative is adopting an appropriate loss. One of the commonly adopted loss function is Hamming loss. %To explain Hamming loss, the following notations has to be introduced. 
Let $A=(A_1,\cdots,A_K)\in \mathcal{A}$ represents a vector of length $K$, where $\mathcal{A}=\{-1,1\}^K$. If the $k$th treatment is adopted, then $A_k=1$;  If not, then $A_k=-1$. Accordingly, decision rule $D(X)=(D_1(X),\cdots,D_K(X))\in \{-1,1\}^K$. Hamming loss of a decision rule $D(\cdot)$ given the sample $(X,A)$ is defined as
\begin{equation}
\frac{1}{K}\sum_{k=1}^{K}I\{A_k\not=D_k(X)\}.
\end{equation}
And it can be interpreted as the proportion of the misclassified labels. Say there are only 3 treatments (labels), the relationship between Hamming loss and 0-1 loss is shown in Figure~\ref{fig:compare_loss}. %The 0-1 loss jumps to $1$ when any label is misclassified, while Hamming loss slowly increases to $1$.

\begin{figure}[ht]
	\centering
	\includegraphics[scale=0.62]{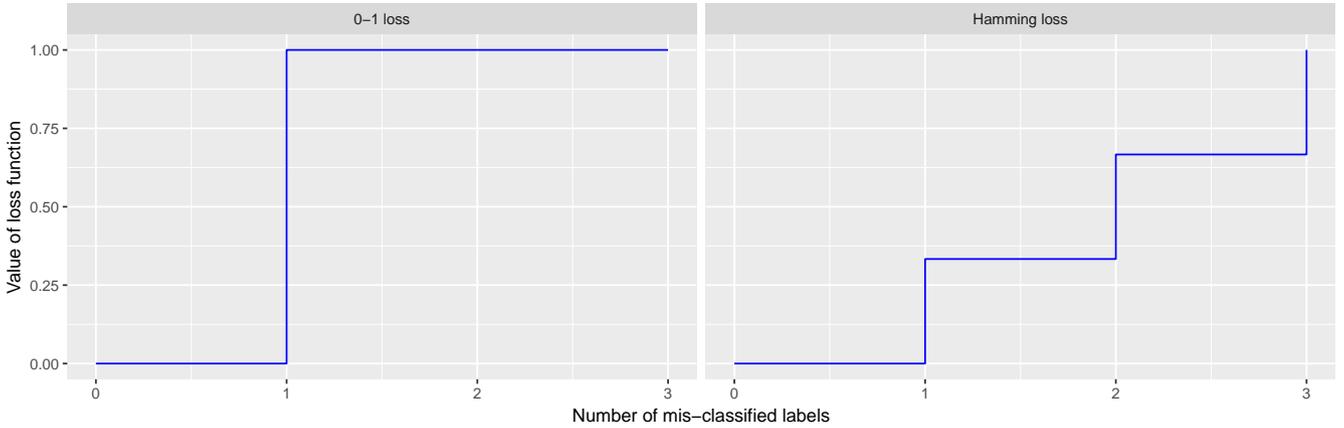}
	\caption{The left is the 0-1 loss, the right is the Hamming loss. Hamming loss is a step function and bounded by 0-1 loss.\label{fig:compare_loss}}
\end{figure}
%When the problem is a multi-class classification problem, it is easy to see that the loss above is exactly 0-1 loss. However, it is different from 0-1 loss in multi-label classification problem where 0-1 loss is $1\{A\not=D(X)\}$. In addition, Hamming loss is also upper bounded by 0-1 loss. While Adaboost.MH proposed in \cite{Schapire2000} tried to find the best weights comparing with Hamming loss essentially using $\frac{1}{K}$ as weights, it is not the focus of our paper since our proposed methods can also easily adopt similar strategy, and boosting empirically works with several `weak' classifiers instead of such a flexible classifier proposed in our paper. Another popular loss in multi-label classification problem is called ranking loss defined as
%\begin{equation}\label{eq:ranking_loss}
%\sum_{\{(k_1, k_2):A_{k_1}=-1, A_{k_2}=1\}} a_A1\{D_{k_1}(X)\geq D_{k_2}(X)\},
%\end{equation}
%where $a_A=|\{k:A_k=-1\}|^{-1}\times |\{k:A_k=1\}|^{-1}$.

Another interesting topic is the shared subspace. % which is related to more statistical and less ad hoc way to promote efficiency. 
The idea proposed by \cite{Yan2007} assumes that all the decision rules are directly depend on the same subspace of covariates. Similar to central mean space estimation in sufficient dimension reduction \cite{Cook2005}, %where find $B\in R^{p\times d}$ {\color{blue}such that $E[R|X]=E[R|B^\top X]$}. 
shared subspace essentially assumes that for certain $B$, there exist $\bar{D}_k$'s such that $D_k(X)=\bar{D}_k(B^\top X)$, for all $k\in 1,\cdots,K$. %What is different is that shared space in multi-label classification problem does not necessarily require this condition to improve efficiency. 
Through shared subspace, we can always borrow some efficiency from other treatments, and thus facilitate the estimation of unknown non-linear relationship.

In addition, scalability of the algorithm is also important because heavy computation complexity can undermine its practical application. %  if efficiency can be guaranteed. 
Multi-label classification enables the scalable computation with respect to the number of treatments. The algorithm still needs to be scalable with the increasing sample size.

From this review of the literature, it's clear that %a good method to solve our problem 
the method we are looking for should fully address the following issues:
\begin{enumerate}
	\item \adapt{Applicability to precision medicine}.
	\item Appropriate loss with the framework of multi-label classification.
	\item Shared subspace.
	\item Scalable computation with both number of treatments and sample size.
\end{enumerate}

\section{Method}\label{sec3}

In this section, a loss function is proposed within outcome weighted learning framework, which takes advantage of multi-label classification. Our classifier based on neural networks (NN) is introduced and combined with the proposed loss function. NN as a classifier naturally satisfies the requirement of shared subspace. And its algorithm is also scalable with the sample size. %Thus, our proposed method enjoys all the advantages mentioned in the previous section. 

\subsection{Outcome weighted learning with multi-label classification}

Let $X_i\in \mathcal{X}$ be the column vector of $p$-dimensional covariates for $i$th patient among total $n$ patients, $A_i=(A_{i1},\cdots,A_{iK})\in \{-1,1\}^K$ is the treatment assignment of the $i$th patient, and $R_i\in \mathbb{R}$ is the observed outcome of the $i$th patient. Again, decision rule is denoted as $$D(X)=(D_1(X),\cdots,D_K(X))\in \{-1,1\}^K.$$ 

Considering the following loss function for a given decision rule $D(X)$,
\begin{equation}\label{eq:proposed_loss}
L(D)=\frac{1}{n}\sum_{i=1}^{n}\frac{R_i}{\pi_{A_i}}\frac{1}{K}\sum_{k=1}^KI\{A_{ik}\not=D_k(X_i)\},
\end{equation}
where $\pi_{A_i}$ may be known in clinical trial or estimated in observational study. %, which is not our main focus in this paper. %Similar to Hamming loss, 
 When the problem is a multi-class classification problem, the loss above is exactly equal to the outcome weighted 0-1 loss proposed in \cite{Zhao2012}. When the problem is a multi-label classification problem, the proposed loss \eqref{eq:proposed_loss} is upper bounded by the outcome weighted 0-1 loss in outcome weighted learning. In addition, if $\frac{R_i}{\pi_{A_i}}$ is a constant, it reduces to Hamming loss. The loss function proposed in \eqref{eq:proposed_loss} combines the loss in outcome weighted learning and Hamming loss. Thus, we call it outcome weighted Hamming loss.

One of the difficulty in optimizing the proposed loss \eqref{eq:proposed_loss} is the non-smoothness and non-convexity of indicator functions. Hinge loss \cite{Cortes1995} or logistic loss can be adopted as surrogate losses for indicator functions. The fisher consistency of proposed loss in \eqref{eq:proposed_loss} and its surrogate loss are provided in the following sections. As proved in Section \ref{sec4}, our proposed loss is fisher consistent under small amount of interactions.

\adapt{
Another difficulty of the proposed method comes from the estimation of $\pi_{A_i}$. We provide two solutions to this issue when $K$ is large. First solution assumes that the treatment assignment $A$ is independent with the covariates $X$. In this case, $P(A=a|X)=P(A=a)$ can be estimated by the proportion of the patients with $A=a$ in the sample. Second solution assumes that each treatment assignment $A_k$ are independent with each other given covariates $X$. In this case, $P(A=a|X)=\prod_{k=1}^K P(A_k=a_k|X)$ and each $P(A_k=a_k|X)$'s can be estimated by logistic regression because $A_k$ are binary. When $K$ is small, multinomial regression can also be used to estimate $P(A=a|X)$ directly. For simplicity, the first solution is adopted in Section \ref{sec5} and Section \ref{sec6}.}

\subsection{Decision rule with deep learning}

In the previous section, outcome weighted Hamming loss is defined. Admittedly any suitable classifier can fit into our framework, the classifier adopted in this paper is the Neural Network (NN) for its advantages in shared subspace and scalable computation.

 To simply illustrate our idea, we start from a $3$-layer NN. The first layer on the bottom is the input layer where $X_i=(X_{i1},\cdots,X_{ip})^\top$ is the input vector of covariates. The layer in the middle is the hidden layer with $d$ hidden variables. The top layer consists of $K$ output variables. An example of the graph structure of this NN is presented in Figure~\ref{fig:NN}. In this toy example, only 3 treatments are considered, say treatment $E$, treatment $F$, and treatment $G$. Thus, $\mathcal{A}=\{-1,1\}^3$. For example, $(1,1,-1)$ represents $EF$ which is the combination of treatment $E$ represented by $(1,-1,-1)$, and $F$ represented by $(-1,1,-1)$. The $k$th output in the top layer is $\tilde{D}_k(X_i)$ given $X_i$ in the input layer. The sign of $\tilde{D}_k(X_i)$ indicates the treatment assignment of $k$th treatment, which is $D_k(X_i)$. As shown in Figure~\ref{fig:NN}, the adjacent layers are fully connected and no variables in the same layer are connected.
 
 Given the graphic structure described above, many NNs can fit into this framework. For example, Deep Belief Nets (DBN) proposed in \cite{Hinton2006}, which consists of stacks of Restricted Boltzmann Machines (RBMs) and a classifier based on the very top hidden layer. Another common choice is \adapt{Deep Neural Network (DNN)}, which is a large NN without probabilistic modeling. DNN can be obtained by firstly training a DBN and then fine-tuning by back-propagation. This approach of building DNN often times can help avoid local minimizers and obtain better generalization error. However, the topologies of DBN and DNN are slightly different. Although the skeleton for both are the same as shown in Fig~\ref{fig:DNN_DBN}, the directions of the connected lines can be different. In DBN, besides the top two layers, all connections between hidden layer and hidden layer or hidden layer and visible layer are bi-directional (or undirected), and only connections between top two layers are directed from the hidden layer to the output. In DNN, all connections are directed from lower layers to higher layers. Directed connections are defined by relationship similar to \eqref{eq:nn_build1} and \eqref{eq:nn_build2}, which have no probabilistic framework. Bi-directional (or undirected) connections are defined through undirected graphical models, such as RBM \cite{Goodfellow-et-al-2016}. Thus, the nodes connected by bi-directional (or undirected) connections are random variables. The value of the node is either a random sample from the defined conditional distribution or the conditional expectation given other nodes. For example, if the lower layer on the right in Fig~\ref{fig:DNN_DBN} is defined by RBM and denote the input layer as $v$, the first hidden layer as $h$, the joint density function of $(v,h)$ is defined as
 \begin{equation*}
 f_{(v,h)}(v,h)=\frac{1}{Z(\theta)}{\exp\{-E(v,h;\theta)\}},
 \end{equation*}
 where $\theta$ is unknown parameter, $Z(\theta)$ is a normalization constant, $E(v,h;\theta)$ is energy function which has certain forms depending on the type of the model \cite{Goodfellow-et-al-2016}. If $\theta$ is known, given an input $v$, the conditional distribution of $h$ given $v$ can be calculated. Either a random sample from this conditional distribution or conditional expectation of $h$ given $v$ can be used as the input for next layers. The density function of $v$ is 
 \begin{equation*}
 f_{v}(v)=\int\frac{1}{Z(\theta)}{\exp\{-E(v,h;\theta)\}}dh.
 \end{equation*}
 Then $\theta$ can be estimated by the maximize likelihood estimation (MLE) given observed data $v$. However, In this paper, we focus on DNN in both simulations and real data analysis.
 
 DNN is defined in the following fashion. For clarification, we explicitly define those variables and weights of connections given toy example in Fig \ref{fig:NN}. Let the input layer in the very bottom be $v=(v_1,\cdots,v_p)\in \mathbb{R}^p$, the hidden layer (consists of $d$ hidden variables) in the middle be $h=(h_1,\cdots,h_d)\in\mathbb{R}^d$, and the output layer on the very top be $\tilde{D}=(\tilde{D}_1,\cdots,\tilde{D}_K)\in\mathbb{R}^K$. Let $W_1$ be the $p\times d$ matrix representing the weights on connections between input layer and hidden layer. For example, the entry of $W_1$ on Row $3$ and Column $4$ is the weight on connection between $v_3$ and $h_4$. Similarly, let $W_2$ be the $d\times K$ matrix representing the weights on connections between hidden layer and output layer. For example, the entry of $W_2$ on Row $3$ and Column $2$ is the weight on connection between $h_3$ and $\tilde{D}_2$. Now, with these notations, the relationship between these variables can be defined as the following:
 \begin{eqnarray}\label{eq:nn_build1}
 h&=&ReLU(W_1^\top v+h_0),\\ \label{eq:nn_build2}
 \tilde{D}&=&W_2^\top h+\tilde{D}_0,
 \end{eqnarray}
 where $ReLU(\cdot)$ is ReLU function defined as $ReLU(t)=t1\{t>0\}$, for any $t\in \mathbb{R}$. $ReLU(O)$ with a vector $O$ represents a vector with ReLU function applied to each entry of $O$. $h_0$ is a constant $d$-dimensional vector and $\tilde{D}_0$ is a constant $K$-dimensional vector. Apparently, from the above relationship, $\tilde{D}$ can be written as $\tilde{D}(v)=(\tilde{D}_1(v),\cdots,\tilde{D}_K(v))$. Having the indicator function replaced by Hinge loss \cite{Cortes1995}, we can formulate the optimization problem as follows,
 \begin{equation}\label{eq:optimization}
 \mini_{\theta}\frac{1}{n}\sum_{i=1}^{n}\frac{R_i}{\pi_{A_i}}\frac{1}{K}\sum_{k=1}^K\left[1-A_{ik}\tilde{D}_k(X_i)\right]_+,
 \end{equation}
 where $\theta=(W_1,h_0, W_2,\tilde{D}_0)$ and $[\cdot]_+$ represents taking non-negative part. This optimization problem can be solved directly by back-propagation and SGD. Note that the structure of DNN and the activation function can be modified based on real data. The final estimated decision rule, $\hat{D}$, is the sign of $\tilde{D}$ given the minimizer of \eqref{eq:optimization}, $\hat{\theta}$.
 
 \begin{figure}
 	\centering
 	\includegraphics[scale=0.5]{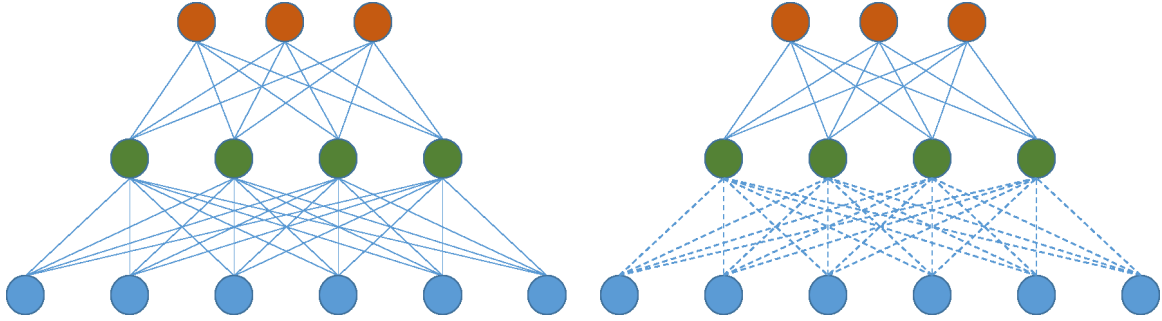}
 	\caption{The left is the structure of DNN and the right is that of DBN. The solid connections represent a directed edge from the lower layer to the top layer. The dashed connections represent bi-directional (or undirected) connections.\label{fig:DNN_DBN}}
 \end{figure}

NN is a natural choice in this setting for the following reasons. First, the optimization of NN given the proposed loss in \eqref{eq:proposed_loss} can be implemented by back-propagation \cite{Rumelhart1986} and stochastic gradient descent (SGD) \cite{Bottou2010}, which is scalable in terms of sample size due to the nature of SGD. Second, the decisions are made based on hidden layers which depend on some shared linear directions of $X_i$ and a pre-specified non-linear activation functions \cite{LeCun2015}. Thus, the subspace of these $K$ decision rules are shared. Third, the hidden variables can capture complicated correlations among treatments, which is quite clear in the point of graphical model that given $X_i$, $\tilde{D}_k$'s are not independent with each other. %Thus, a graphical model with the same structure can capture the correlation among treatments. 
Based on universal approximation theorem proved in \cite{Barron1993}, hidden variables also introduce more flexibility into the model. Without the hidden layers, NN is equivalent to a linear model. 
%When there is hidden layer, NN can represent much more complicated model. 
In most applications, the structure of hidden layers such as the number of hidden layers and the number of hidden variables in each hidden layer is adjustable. More complicated the hidden structure is, more flexible the decision rule can be. Thus, our NN decision rule with proposed loss function in \eqref{eq:proposed_loss} can fully satisfy our requirements. Another ad hoc view of the hidden layers in Fig~\ref{fig:NN} is that those hidden variables may represent certain biological pathways through which certain treatment can affect the outcome. Thus, the decision rule is reasonable to depend on certain hidden variables.

\begin{figure}
	\centering
	\includegraphics[scale=0.5]{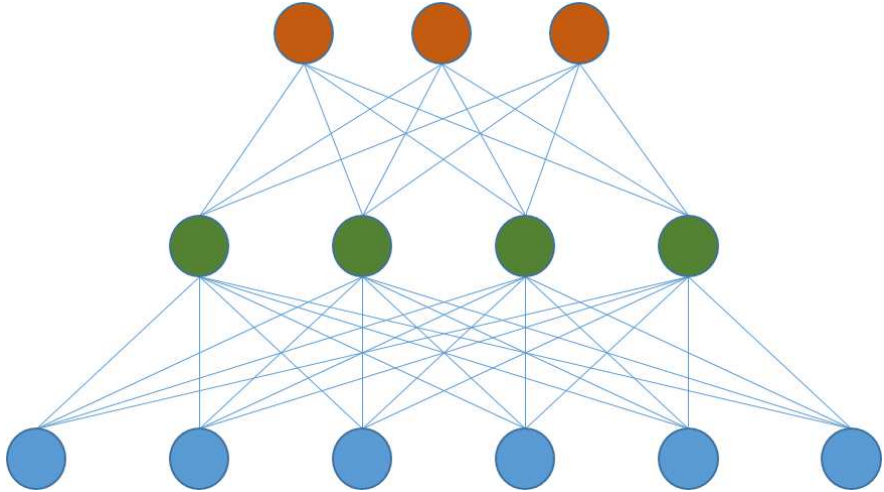}
	\caption{The first layer on the bottom is the input layer where $X_i=(X_{i1},\cdots,X_{ip})^\top$ is the input. The layer on the top is hidden layer with $d$ hidden variables. The top layer consists of $K$ output random variables.\label{fig:NN}}
\end{figure}

\subsection{Overfitting}

Due to strong representative power of DNN, how to avoid over-fitting has become a big issue. To avoid over-fitting, a commonly used strategy is regularization through penalization. Mathematically, we can consider to solve the following problem:
\begin{equation}
\mini_{\theta}\frac{1}{n}\sum_{i=1}^{n}\frac{R_i}{\pi_{A_i}}\frac{1}{K}\sum_{k=1}^K[1-A_{ik}\tilde{D}_k(X_i)]_++\lambda p(W_1,W_2),
\end{equation}
where $\theta=(W_1,h_0, W_2,\tilde{D}_0)$. $p(W_1,W_2)$ is the penalty on $W_1$ and $W_2$. For example, $p(W_1,W_2)=\|W_1\|_F^2+\|W_2\|_F^2$ is the ridge penalty, where $\|\cdot\|_F$ is the Frobenius norm, and $p(W_1,W_2)=\|W_1\|_1+\|W_2\|_1$ is lasso penalty, where $\|\cdot\|_1$ represents the sum of absolute value of all entries. Both ridge penalty and lasso penalty can shrink the weights towards $0$, but the advantage of lasso penalty is that it can produce sparse solutions. Another popular method to prevent over-fitting is the so-called dropout proposed in \cite{Srivastava2014}. Wager et al. \cite{Wager2013} argue that dropout is closely related to adaptive penalization and they also clarify its relationship with ridge penalty. %After training, the weights for prediction is the probability of retaining the connections times the weights in the last iteration of back-propagation. 
In general, both regularization and dropout are possible choices to prevent over-fitting.

\adapt{In both regularization and dropout, how to choose the tuning parameter has been investigated for years\cite{Srivastava2014, tibshirani1996regression,zou2005regularization}. Cross-validation or training-validation-testing data split can be used to evaluate the prediction error and select tuning parameters with the best performance. }
	%one round of the $m$-fold cross-validation involves randomly partitioning the original dataset into $m$ subsets, training model on , and a score is calculated based on the rest fold which is considered as testing dataset. Then, another fold is selected as testing and the rest for training. The procedure is repeated until all folds have been used for testing, thus $M$ scores are obtained eventually. The parameter with the optimum score averaging all $M$ scores will be selected.}

\adapt{Other parameters such as number of layers and number of nodes each layer also play an important role in balancing under-fitting and over-fitting. Of course, cross-validation or training-validation-testing data split can be applied to tune the NN's structure, but it is extremely computational intensive and impractical some times. In general, how to design NN's structure is still an open problem, but extensive researches have been done in this field. On the one hand, some researchers are working on using genetic algorithm and reinforcement learning to decide these parameters \cite{leung2003tuning,zoph2016neural}. On the other hand, sparsity induced by lasso penalty can partially solve this issue. Suppose that all weights connected to a particular variable are $0$, it is equivalent to excluding this variable in the structure of NN. Additionally, other penalties, for example, in \cite{Scardapane:2017:GSR:3067301.3067328}, weights connected to a particular variable can be forced to be $0$ and thus the node is excluded in the structure. Thus, in practice, we may suggest a structure as more complex as possible considering the sample size and employ lasso penalty or other penalties to help select nodes included in the structure automatically.}

\section{Algorithm}

%Now, two algorithms can be implemented to solve the optimization problem proposed in \eqref{eq:optimization}. First algorithm takes advantage of RBM, training stacks of RBM in such fashion that the first RBM is trained on observed covariates and next RBM is trained on the top of the first one, taking the hidden layer in the first RBM as the visible layer in the next RBM, and so on. Then, for the two layers on the very top, a multi-label classifier based on outcome weighted Hamming loss is trained, and whole networks is fine-tuned by back-propagation. Second algorithm is directly implementing back-propagation to form a DNN. If the proposed DNN is not large, second algorithm is more direct. To obtain scalable computation, SGD is also employed with back-propagation. In SGD, gradient is computed on a small batch based on a sub-sample of training dataset. On the one hand, SGD prevents the minimizer falling into local minimizers. On the other hand, it reduces the computation complexity and resources dramatically for each gradient computing and allows flexible number of iterations for convergence compared with usual gradient descent. In general, SGD converges faster than usual gradient descent, especially for statistical problems.

In this section, we introduce the algorithm to solve the proposed optimization problem. A straightforward solution is back-propagation with stochastic gradient descent (SGD). Back-propagation proposed in \cite{Goodfellow-et-al-2016} is an efficient algorithm to numerically compute the derivatives with respect to certain weight given a NN. SGD provides an computational effect alternative to gradient descent. Beyond directly implementing SGD, pre-training can also be used to prevent local minimizers and facilitate convergence of the algorithm. One of the pre-training procedure utilizes Restricted Boltzmann Machine (RBM) \cite{Goodfellow-et-al-2016}. In this procedure, we firstly train stacks of RBM in such fashion that the first RBM is trained on observed covariates and next RBM is trained on the top of the first one, taking the hidden layer in the first RBM as the visible layer in the next RBM, and so on. Then, for the two layers on the very top, a multi-label classifier based on outcome weighted Hamming loss is trained. After this pre-training procedure, the whole network is fine-tuned by back-propagation and SGD. 

\subsection{Stochastic gradient descent}

\adapt{
In this section, we briefly introduce the stochastic gradient descent (SGD) under general empirical risk minimization. Say we want to minimize the following loss function
\begin{equation}
%\min_{\theta} \frac{1}{n}\sum_{i=1}^{n} l(Y_i, A_i, X_i;\theta),
\min_{\theta} \frac{1}{n}\sum_{i=1}^{n} l(Y_i, X_i;\theta),
\end{equation}
where $\theta$ is the general notation for all parameters of interest, $(X_i, Y_i) ,~i=1,\cdots,n$ is the observed data. In gradient descent, we have
\begin{equation}
%\theta^{(m+1)}\leftarrow \theta^{(m)}-t\frac{1}{n}\sum_{i=1}^{n}\nabla_{\theta}l(Y_i, A_i, X_i;\theta),
\theta^{(m+1)}\leftarrow \theta^{(m)}-t\frac{1}{n}\sum_{i=1}^{n}\nabla_{\theta}l(Y_i, X_i;\theta),
\end{equation}
where $\theta^{(m)}$ is the value of $\theta$ at the $m$ step, and $t>0$ is a scalar called learning rate (step size). In stochastic gradient descent, we have
\begin{equation}
%\theta^{(m+1)}\leftarrow \theta^{(m)}-t\frac{1}{|\mathcal{S}^{(m)}|}\sum_{i\in \mathcal{S}^{(m)}}\nabla_{\theta}l(Y_i, A_i, X_i;\theta),
\theta^{(m+1)}\leftarrow \theta^{(m)}-t\frac{1}{|\mathcal{S}^{(m)}|}\sum_{i\in \mathcal{S}^{(m)}}\nabla_{\theta}l(Y_i,  X_i;\theta),
\end{equation}
where $\mathcal{S}^{(m)}$ is a random sample from $\{1,\cdots, n\}$ with or without replacement, and $|\mathcal{S}^{(m)}|$ is the cardinality of the set $\mathcal{S}^{(m)}$, which is a pre-specified batch size. It is easy to see that
\begin{equation}
%E_{\mathcal{S}}\left [\frac{1}{|\mathcal{S}^{(m)}|}\sum_{i\in \mathcal{S}^{(m)}}\nabla_{\theta}l(Y_i, A_i, X_i;\theta)\right ]=\frac{1}{n}\sum_{i=1}^{n}\nabla_{\theta}l(Y_i, A_i, X_i;\theta),
E_{\mathcal{S}}\left [\frac{1}{|\mathcal{S}^{(m)}|}\sum_{i\in \mathcal{S}^{(m)}}\nabla_{\theta}l(Y_i,  X_i;\theta)\right ]=\frac{1}{n}\sum_{i=1}^{n}\nabla_{\theta}l(Y_i,  X_i;\theta),
\end{equation}
where $E_{\mathcal{S}}[\cdot]$ is the expectation of random sampling.} Because each update only depends on a small size of random sample, it is scalable with the increase of sample size.

\adapt{In each update of SGD, gradient is computed on a small batch based on a sub-sample of training dataset. On the one hand, the stochastic induced by sub-sampling prevents the algorithm from falling into local minimizers. On the other hand, it reduces the computation complexity dramatically compared with usual gradient descent. In general, SGD converges faster than usual gradient descent, especially for statistical problems \cite{bousquet2008tradeoffs}. However, due to the stochastic nature of SGD, the calculated gradient based on a batch is mostly not zero even at the global minimizer. Thus, the variance of sampling small batches has an impact on the updates of each iteration. To reduce this variance, SVRG and other techniques \cite{Johnson2013} are proposed. Another simple strategy is to gradually increase the size of the batch in gradient calculation and decrease learning rate (step size) exponentially in gradient update. The convergence of SGD has been proved in strictly convex problem \cite{bousquet2008tradeoffs}, while generally the convergence of SGD is still an open problem. }%, some researchers have investigated its convergence property by using stochastic differential equations \cite.}

\subsection{Implementation}

\adapt{In this section, we introduce some packages to implement SGD. A well known package in Python is called Tensorflow developed by Google. Tensorflow provides an extremely powerful tool to customize NN (number of layers and number of nodes each layer) and SGD (batch size and step size). It also includes some well-developed advanced algorithm for SGD and keeps updating. Compared with Tensorflow, Keras is a high-level NN API which is more user-friendly. It allows researchers to build a very flexible NN with a few lines of code. The down side is that it may not be easy to customize everything using Keras.  Recently, R initiated an access to Tensorflow, which is more friendly to statistical programmers. An experimental R package to implement the Keras is also available on GitHub. Readers can get  more information on R versions of Tensorflow and Keras on https://tensorflow.rstudio.com.} %Essentially, Tensorflow and Keras are great tools to implement the SGD, thus they can also be used for many other settings. }

%\subsubsection{Variance reduction in SGD}

%Due to the stochastic nature of SGD, the calculated gradient based on a batch is not $0$ for the most of time even at the global minimizer. Thus, the variance of small batch has an impact on the updates of each iteration. To reduce this variance, SVRG and other techniques \cite{Johnson2013} are proposed to reduce the variance of SGD. Another simple strategy is to gradually increase the size of the batch in gradient calculation and decrease learning rate (step size) exponentially in gradient update.

\section{Theoretical result}\label{sec4}

In this section, the theoretical justification of our proposed loss is provided. In the first part, we prove the fisher consistency of outcome weighted Hamming loss, which implies that minimizing the proposed loss is equivalent to minimizing the original outcome weighted 0-1 loss. In the second part, surrogate loss for outcome weighted Hamming loss is proposed and its consistency with outcome weighted Hamming loss is also provided, which indicates that minimizing the surrogate loss is also equivalent to minimizing the outcome weighted Hamming loss. Combining these two parts of theoretical results provides an overall theoretical support for our proposed method.

Without loss of generality, we assume that outcome $R$ is non-negative. % In general, we can always assume outcome $R$ is non-negative. Our theoretical results can also be easily extended to the case with negative outcome or any outcome.

\subsection{Fisher consistency of Hamming loss}

We establish the fisher consistency of outcome weighted Hamming loss in this section, indicating the equivalence of minimizing outcome weighted Hamming loss and outcome weighted 0-1 loss. Let $D^*(X)=(D_1^*(X), \cdots,D_K^*(X))\in \{-1,1\}^K$ be the decision rule that minimizes outcome weighted 0-1 loss:
\begin{equation}
\mathcal{R}(D)=E\left[\frac{R}{\pi_A}I\{A\not = D(X)\}\right].
\end{equation}
It is easy to see that $D^*(X)=\{a:\max_a E[R|A=a, X]\}$. Again, outcome weighted Hamming loss is defined as
\begin{equation}
\mathcal{R}_H(D)=E\left[\frac{R}{\pi_A}\frac{1}{K}\sum_{k=1}^KI\{A_k\not = D_k(X)\}\right].
\end{equation}

\begin{theorem}[Fisher consistency of outcome weighted Hamming loss]\label{thm1}
	Suppose $$D_k^*(X)=\sign\left\{\sum_{\{a:a_k=1\}}E[R|A=a,X]-\sum_{\{a:a_k=-1\}}E[R|A=a,X]\right\},$$
	then any function $f$ such that $$\mathcal{R}_H(f)=\inf_D\{\mathcal{R}_H(D)\}$$ satisfies 
	$$\mathcal{R}(f)=\inf_D\{\mathcal{R}(D)\}.$$
\end{theorem}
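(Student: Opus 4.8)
The plan is to reduce the problem to a pointwise (conditional on $X$) optimization and then use an inverse-probability-weighting identity to rewrite both risks in terms of the conditional means $Q(a) := E[R\mid A=a,X]$. Since the decision rule $D$ ranges over all measurable maps $\mathcal{X}\to\{-1,1\}^K$, minimizing either $\mathcal{R}$ or $\mathcal{R}_H$ over $D$ is equivalent to minimizing the corresponding conditional risk at (almost) every fixed $X$. Hence it suffices to show that, for fixed $X$, the minimizer of the conditional Hamming risk coincides with the minimizer of the conditional $0$--$1$ risk.

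First I would establish the weighting identity $E\big[\tfrac{R}{\pi_A}\,g(A)\,\big|\,X\big]=\sum_{a}Q(a)\,g(a)$, which follows from $E\big[\tfrac{R}{\pi_A}I\{A=a\}\mid X\big]=Q(a)$. Applying it with $g(a)=I\{a\neq d\}$ gives the conditional $0$--$1$ risk $\mathcal{R}(d\mid X)=\sum_{a}Q(a)-Q(d)$, whose minimizer over $d$ is exactly $D^*(X)=\arg\max_a Q(a)$, recovering the characterization stated just before the theorem. Applying the same identity with $g(a)=\tfrac1K\sum_k I\{a_k\neq d_k\}$, and splitting each inner sum according to whether $a_k=1$ or $a_k=-1$, the conditional Hamming risk becomes separable across coordinates: writing $S_k^{+}=\sum_{a:a_k=1}Q(a)$ and $S_k^{-}=\sum_{a:a_k=-1}Q(a)$, one obtains $\mathcal{R}_H(d\mid X)=\tfrac1K\sum_{k=1}^K\big[S_k^{-}I\{d_k=1\}+S_k^{+}I\{d_k=-1\}\big]$. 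Being a sum of terms each depending on a single $d_k$, this is minimized coordinate-wise, with optimal choice $d_k=\sign(S_k^{+}-S_k^{-})$.

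The key step is to recognize that the theorem's hypothesis is precisely the condition $D_k^*(X)=\sign(S_k^{+}-S_k^{-})$, i.e. that the coordinatewise (marginal) optimum agrees with the $k$th coordinate of the joint argmax $D^*$. I would note that in general these two can disagree, so the hypothesis is exactly the bridging condition that closes this gap; it also forces $S_k^{+}\neq S_k^{-}$ (otherwise $\sign$ would equal $0\notin\{-1,1\}$), so the coordinatewise Hamming minimizer is unique. Consequently any $f$ minimizing $\mathcal{R}_H$ must satisfy $f_k(X)=\sign(S_k^{+}-S_k^{-})=D_k^*(X)$ for every $k$ and almost every $X$, whence $f=D^*$ and therefore $\mathcal{R}(f)=\mathcal{R}(D^*)=\inf_D\mathcal{R}(D)$.

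The main obstacle here is conceptual rather than computational. The weighting identity, the additive form $C-Q(d)$ of the $0$--$1$ risk, and the coordinatewise separability of the Hamming risk are all routine once set up; the real content is to see that minimizing the Hamming surrogate only ever recovers the marginal sign rule $\sign(S_k^{+}-S_k^{-})$, which need not equal the global argmax $\arg\max_a Q(a)$, and that the stated hypothesis is exactly the assumption under which these two coincide. I would therefore emphasize this alignment condition, and anticipate that the later ``small interaction'' results (Section~\ref{sec4}) are precisely sufficient conditions guaranteeing that this hypothesis holds.
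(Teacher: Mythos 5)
Your proposal is correct and follows essentially the same route as the paper's own proof: rewrite $\mathcal{R}_H$ via the inverse-probability-weighting identity so it separates coordinatewise into terms $S_k^{-}I\{d_k=1\}+S_k^{+}I\{d_k=-1\}$, read off the minimizer $f_k=\sign(S_k^{+}-S_k^{-})$, and invoke the hypothesis to identify this with $D_k^*$. Your added remark that the hypothesis forces $S_k^{+}\neq S_k^{-}$ (so the coordinatewise minimizer is unique) is a small tightening of a point the paper leaves implicit, but it does not change the argument.
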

Note that in many cases, condition $D_k^*(X)=\sign\left\{\sum_{\{a:a_k=1\}}E[R|A=a,X]-\sum_{\{a:a_k=-1\}}E[R|A=a,X]\right\}$ holds. For example, a sufficient condition is when treatment effects are additive and there is no interaction, i.e. $E[R|A=a, X]=\sum_{\{k:a_k=1\}}T_{e_k}(X) + m(X)$, where $T_{e_k}, k=1,\cdots,K$, is the treatment effect given only treatment $k$, $e_k$ is a $K$-dimensional vector having the $k$th entry equals $1$ and all other entries equal $-1$. %In this example, it is easy to verify that the condition above holds. 
Moreover, small interactions are tolerable, as stated in Theorem \ref{thm2}. 

\begin{theorem}[Fisher consistency in special case]\label{thm2}
	Suppose that 
	\begin{equation*}
	E[R|A=a, X]=\sum_{\{k:a_k=1\}}T_{e_k}(X) + r_a(X) + m(X),
	\end{equation*}
	where $T_{e_k}(X)$ is the treatment effect of only $k$th treatment being adopted, $r_a(X)$ is the additional interaction, and $m(X)$ is the main effect. If $2\sup_a|r_a(X)|< \inf_k|T_{e_k}(X)|$ for all $X$, then any function $f$ such that $$\mathcal{R}_H(f)=\inf_D\{\mathcal{R}_H(D)\}$$ satisfies 
	$$\mathcal{R}(f)=\inf_D\{\mathcal{R}(D)\}.$$
\end{theorem}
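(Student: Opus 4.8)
The plan is to deduce Theorem~\ref{thm2} from Theorem~\ref{thm1}. It suffices to verify, for every $k$ and every $X$, the sufficient condition appearing in Theorem~\ref{thm1}, namely that $D_k^*(X)$ equals the sign of
$$S_k(X):=\sum_{\{a:a_k=1\}}E[R|A=a,X]-\sum_{\{a:a_k=-1\}}E[R|A=a,X].$$
I will show that, under the additive-plus-interaction model and the gap condition $2\sup_a|r_a(X)|<\inf_k|T_{e_k}(X)|$, both $D_k^*(X)$ and $\sign\{S_k(X)\}$ reduce to $\sign\{T_{e_k}(X)\}$; the two therefore coincide, Theorem~\ref{thm1} applies verbatim, and the conclusion follows.

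First I would identify the Bayes rule $D^*(X)=\arg\max_a E[R|A=a,X]$ coordinate-wise and show $D_k^*(X)=\sign\{T_{e_k}(X)\}$. Fix $X$ and suppose some maximizer $a^*$ had $a_k^*\neq\sign\{T_{e_k}(X)\}$. Let $a'$ be $a^*$ with its $k$th coordinate set to $\sign\{T_{e_k}(X)\}$. A single coordinate flip leaves $m(X)$ unchanged and changes the additive part $\sum_{\{j:a_j=1\}}T_{e_j}(X)$ by exactly $|T_{e_k}(X)|$ in the favorable direction, so $E[R|A=a',X]-E[R|A=a^*,X]=|T_{e_k}(X)|+\big(r_{a'}(X)-r_{a^*}(X)\big)\ge|T_{e_k}(X)|-2\sup_a|r_a(X)|>0$ by the gap condition, contradicting the optimality of $a^*$. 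The same gap condition forces $T_{e_k}(X)\neq0$, so the maximizer is unique and $D_k^*(X)=\sign\{T_{e_k}(X)\}$.

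Next I would evaluate $S_k(X)$ directly from the model. Summing over the $2^{K-1}$ assignments in each group, the main effect $m(X)$ contributes $2^{K-1}m(X)$ to both sums and cancels; for each $j\neq k$ the term $T_{e_j}(X)$ appears $2^{K-2}$ times in each sum and cancels as well; only the $k$th additive term survives, with $a_k=1$ contributing $2^{K-1}T_{e_k}(X)$ and $a_k=-1$ contributing nothing. Hence $S_k(X)=2^{K-1}T_{e_k}(X)+\Delta_k(X)$, where $\Delta_k(X)=\sum_{\{a:a_k=1\}}r_a(X)-\sum_{\{a:a_k=-1\}}r_a(X)$ collects the interaction residuals. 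Pairing each $a$ with $a_k=1$ to its $k$-flip bounds $|\Delta_k(X)|\le 2^{K-1}\cdot 2\sup_a|r_a(X)|=2^K\sup_a|r_a(X)|$. The gap condition then gives $2^{K-1}|T_{e_k}(X)|>2^{K-1}\cdot 2\sup_a|r_a(X)|=2^K\sup_a|r_a(X)|\ge|\Delta_k(X)|$, so the surviving term dominates and $\sign\{S_k(X)\}=\sign\{T_{e_k}(X)\}=D_k^*(X)$, which is precisely what Theorem~\ref{thm1} requires.

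The main obstacle I anticipate is the careful bookkeeping of the constant: the factor $2$ in the hypothesis must simultaneously cover two distinct estimates. In the Bayes-rule step it guarantees that the interaction swing $2\sup_a|r_a(X)|$ from a single flip is beaten by $|T_{e_k}(X)|$, whereas in the $S_k(X)$ step it must guarantee that the aggregated bound $2^K\sup_a|r_a(X)|$ is beaten by $2^{K-1}|T_{e_k}(X)|$. Both collapse to the single inequality $|T_{e_k}(X)|>2\sup_a|r_a(X)|$, so the delicate point is verifying that the stated constant is exactly tight enough to drive both steps, together with the combinatorial cancellation of the main effect and the off-diagonal additive terms in $S_k(X)$.
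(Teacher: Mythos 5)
Your proposal is correct and follows essentially the same route as the paper's own proof: both arguments verify the sufficient condition of Theorem~\ref{thm1} by showing that the Bayes rule satisfies $D_k^*(X)=\sign\{T_{e_k}(X)\}$ (your single-coordinate flip is a local version of the paper's comparison of $D^*$ against all other rules) and that the aggregated quantity $S_k(X)$ has the same sign, using the identical cancellation of $m(X)$ and the off-diagonal $T_{e_j}$ terms and the same $2^{K-1}|T_{e_k}(X)|$ versus $2^K\sup_a|r_a(X)|$ comparison. If anything, your exact decomposition $S_k(X)=2^{K-1}T_{e_k}(X)+\Delta_k(X)$ with a two-sided bound on $\Delta_k(X)$ is slightly tidier than the paper's one-sided inequality, which as written only covers the case $T_{e_k}(X)>0$ and leaves the symmetric case implicit.
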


Theorem \ref{thm2} provides a sufficient condition for fisher consistency of outcome weighted Hamming loss. It essentially provides a theoretical guarantee of the robustness of our method against small interactions. When the condition is violated, our method is not necessarily consistent and its performance depends on the magnitude of interactions and amounts of patient affected. In simulation, we compare the performance our method with other methods under small violation of this condition.%On the other hand, noticed that the expectation of potential outcome can always be decomposed into the form of $E[R|A=a, X]=\sum_{\{k:a_k=1\}}T_{e_k}(X) + r_a(X) + m(X)$, outcome weighted Hamming loss is in favor of the treatment recommendation based on additive treatment effects and small interactions.
\subsection{Multi-label consistency of surrogate loss}

Minimizing the proposed loss \eqref{eq:proposed_loss} is still very hard due to the non-smoothness and non-convexity of indicator functions. Therefore, it is natural to replace the indicator functions in outcome weighted Hamming loss with some surrogate loss. For example, a common choice for outcome weighted Hamming loss is 
\begin{equation}\label{eq:surrogate_hamming}
\Phi_H(\tilde{D})=E\left[\frac{R}{\pi_A}\frac{1}{K}\sum_{k=1}^K\phi(A_k\tilde{D}_k(X))\right],
\end{equation}
where $\phi$ is a pre-defined convex function. Still, it is critical that minimizing the surrogate loss is equivalent to minimizing the original outcome weighted Hamming loss. Unfortunately, not every $\phi$ satisfies this condition. As shown in the following theorem, \eqref{eq:surrogate_hamming} is consistent with outcome weighted Hamming loss if $\phi$ is one of the following:
\begin{enumerate}
	\item Exponential: $\phi(x)=e^{-x}$;
	\item Hinge: $\phi(x)=(1-x)_+$;
	\item Least squares: $\phi(x)=(1-x)^2$;
	\item Logistic Regression: $\phi(x)=\ln(1+e^{-x})$.
\end{enumerate}

Formally, we have the following theorem.
\begin{theorem}[Multi-label consistency of surrogate loss]\label{thm3}
	Suppose $$\phi'(0)<0,$$
	for any function $\tilde{f}$ such that $$\Phi_H(\tilde{f})=\inf_{\tilde{D}}\{\Phi_H(\tilde{D})\},$$ let $f=\sign(\tilde{f})$, then $f$ satisfies 
	$$\mathcal{R}_H(f)=\inf_D\{\mathcal{R}_H(D)\}.$$
\end{theorem}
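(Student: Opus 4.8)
The plan is to exploit the fact that both $\mathcal{R}_H$ and $\Phi_H$ decompose as a sum over the $K$ coordinates, so that the whole statement collapses to a one-dimensional, coordinate-wise calibration fact. First I would condition on $X$ and introduce the coordinate-wise conditional weights $\eta_k^+(x)=E[\frac{R}{\pi_A}I\{A_k=1\}|X=x]$ and $\eta_k^-(x)=E[\frac{R}{\pi_A}I\{A_k=-1\}|X=x]$, both nonnegative since $R\ge 0$ and $\pi_A>0$. Splitting the indicator $\phi(A_k\tilde D_k(X))$ according to the value of $A_k$ gives
\begin{equation*}
\Phi_H(\tilde D)=\frac{1}{K}\sum_{k=1}^K E_X\left[\eta_k^+(X)\phi(\tilde D_k(X))+\eta_k^-(X)\phi(-\tilde D_k(X))\right].
\end{equation*}
The key structural observation is that the $k$th summand depends only on the scalar $\tilde D_k(X)$, so minimizing $\Phi_H$ over the vector-valued function $\tilde D$ separates into independent pointwise scalar problems indexed by $(k,x)$. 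The same splitting applied to $\mathcal{R}_H$ shows its coordinate-$k$ conditional risk equals $\eta_k^-(x)$ when $D_k(x)=1$ and $\eta_k^+(x)$ when $D_k(x)=-1$, so the Hamming-minimizing sign at $(k,x)$ is exactly $\sign(\eta_k^+(x)-\eta_k^-(x))$ (which, via the inverse-propensity identity, reproduces the $D_k^*$ of Theorems \ref{thm1}--\ref{thm2}, though that link is not needed here).

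The heart of the argument, and the step I expect to be the main obstacle, is the following one-dimensional calibration lemma: for $p,q\ge 0$ with $p\neq q$ and $\phi$ convex with $\phi'(0)<0$, every minimizer of $H(\beta)=p\phi(\beta)+q\phi(-\beta)$ satisfies $\sign(\beta)=\sign(p-q)$. I would prove this by noting $H$ is convex and computing $H'(0)=(p-q)\phi'(0)$; because $\phi'(0)<0$, the sign of $H'(0)$ is opposite to that of $p-q$, and since $H'$ is nondecreasing by convexity, $H$ is strictly monotone on the entire half-line lying on the wrong side of $0$, so no minimizer can sit there and any minimizer must lie strictly on the side indicated by $\sign(p-q)$. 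The delicate points to handle carefully are the nonsmooth hinge loss and the possibility that the infimum is not attained (e.g.\ exponential loss with $q=0$, where $H$ is strictly decreasing and the minimizer escapes to $+\infty$). For the hinge, the kink sits at $1$, not $0$, so $\phi$ is differentiable at $0$ with $\phi'(0)=-1$ and the derivative computation goes through verbatim; for nonattainment I would invoke the theorem's hypothesis that a minimizing $\tilde f$ exists, so only finite minimizers need to be constrained. Verifying $\phi'(0)<0$ for the four listed losses ($-1,-1,-2,-\tfrac12$ respectively) is then a one-line check.

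Finally I would assemble the pieces. By the lemma, at every $(k,x)$ with $\eta_k^+(x)\neq\eta_k^-(x)$ the surrogate minimizer satisfies $\sign(\tilde f_k(x))=\sign(\eta_k^+(x)-\eta_k^-(x))$, which is precisely a Hamming-optimal sign; and at $(k,x)$ with $\eta_k^+(x)=\eta_k^-(x)$ the two signs give identical Hamming contributions, so the value of $\sign(\tilde f_k(x))$ is immaterial. Consequently $f=\sign(\tilde f)$ agrees with a minimizer of $\mathcal{R}_H$ coordinate-wise wherever it matters, and integrating the pointwise-optimal conditional Hamming risk over $X$ yields $\mathcal{R}_H(f)=\inf_D\{\mathcal{R}_H(D)\}$, as claimed.
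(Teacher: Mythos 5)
Your proposal is correct and follows essentially the same route as the paper: the identical coordinate-wise decomposition of $\Phi_H$ (your weights $\eta_k^{\pm}(x)$ coincide, via the inverse-propensity identity, with the paper's $\sum_{\{a:a_k=\pm 1\}}E[R|A=a,X]$), followed by reduction to the one-dimensional calibration property of a convex $\phi$ with $\phi'(0)<0$. The only difference is that the paper dispatches this last step by citing Bartlett, Jordan and McAuliffe (2006), whereas you prove it from scratch; your convexity/derivative-at-zero argument, including the handling of ties and of non-attainment, is in substance the standard proof of that cited calibration result.
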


\section{Simulation}\label{sec5}

\subsection{Simulation with correctly specified model}\label{subsec5.1}
\label{Sec: correctmodel}
In this section, we will illustrate the performance of two DNN approaches under outcome weighted Hamming loss in \eqref{eq:optimization} by comparing with a naive method through simulations. In the following simulation settings, $K=5$, so in total there are $2^5=32$ combinations of treatments. The dimension of covariates is set to be $p=30$, which is common in clinical trials. As explained in Section \ref{review:multi-label}, the $K$ treatment multi-label problem can be decomposed to a multi-class problem with $2^K$ classes. The naive method further converts the multi-class problem to a series of two-class classification problem, where for each, it directly learns a binary classifier with linear decision rule through outcome weighted learning. To form a multi-class classifier with $32$ different classifiers with intercepts, $2^K(2^K-1)(p+1)/2=15376$ coefficients in $2^K(2^K-1)/2=496$ different linear decision rules have to be estimated. The first DNN approach, DNN-simple, is a DNN with only input and output layer. For any monotone activation function, DNN-simple is equivalent to $K=5$ binary linear classifiers with no shared subspace. In general, the only difference between DNN-simple and naive method is how to form the classifier for multi-label classification. In DNN-simple, only $K\times p+K=155$ parameters needs to be estimated. The comparison of DNN-simple and naive method essentially shows the efficiency boosting by adopting our proposed outcome weighted Hamming loss. The other DNN approach, DNN-1hdd, adds one hidden layer between input layer and output layer as shown on the left in Figure \ref{fig:DNN_DBN}. The hidden layer is fully connected with the input layer and the output layer.The NN structure allows for more flexibility and hence more parameters. In DNN-1hdd method, suppose the number of hidden variable is $n_h$, the total number of parameters to be estimated is  $p(n_h+1)+n_h(K+1)=p+n_h(p+K+1)=30+36n_h$. By comparing DNN-simple and DNN-1hdd, it is easy to tell the loss and gain to accommodate a more flexible model. In addition, the Bayes rule is also evaluated, simply to quantifies the signal-to-noise ratio in our simulation settings. The Bayes rule is the treatment assignment which gives the largest conditional expectation of the potential outcomes, i.e. 
\begin{equation}
D_{\rm B}(X)=\arg \max_aE\left[R|A=a,X\right].
\label{Eq:bayesrule}
\end{equation}
 The Bayes rule is impossible to implement in reality, but in simulations, since we know the data generating procedure, we can directly evaluate $E\left[R|A=a,X\right]$. 

In the simulation, every patient has the same probability to receive one of the treatment combination, and for each treatment, the treatment effect is generated from a NN with one hidden layer as shown in the left of Figure~\ref{fig:DNN_DBN}. The number of the hidden variables in the hidden layer is $n_h=45$. Firstly, we define the treatment effect of the $k$th treatment for $i$th patient $T_{k,i}$. Let 
\begin{eqnarray}
h&=&ReLU(W_1^\top X_i),\\
T_{k,i}&=&\sign\left (W_{2,\cdot,k}^\top h\right )\left\{\frac{0.05\exp\left\{W_{2,\cdot,k}^\top h\right\}}{1+\exp\left\{W_{2,\cdot,k}^\top h\right\}}+2.0\right\},
\end{eqnarray}
where $h$ is the hidden layer which is a $n_h$-dimensional vector, $W_{2,\cdot,k}$ is the $k$th column of $W_2$, and $\sign(\cdot)$ is the function of taking sign coordinate-wise. All entries in $W_1$ and $W_2$ are generated from a standard normal distribution independently. Secondly, the main effect is defined by the following:
\begin{equation}
M_i=0.05\frac{\exp\left\{\gamma^\top X_i\right\}}{1+\exp\left\{\gamma^\top X_i\right\}}-2.05,
\end{equation}
where $\gamma$'s are the coefficients for main effect, whose entries are also generated from a standard normal distribution independently. Let $A_i\in\{-1,1\}^K$ represents the combination of assigned treatment to $i$th patient. Note that $R_{i,A_i}$, the potential outcome when given $A_i$ is defined by the following:
\begin{equation}
R_{i,A_i}=\sum_{\{k: A_{ik}=1\}}T_{k,i}+M_i+\sigma\epsilon_{i,A_i},
\end{equation}
where $\epsilon_{i,A_i}$ follows standard normal distribution and is independent with any other random variables. It can be simply verified that $\forall \beta_{p,k}, \beta_{n,k}, $ and $\gamma$, given any $A_i$, $\sum_{\{k: A_{ik}=1\}}T_{k,i}+M_i\in[-12,8]$. Let $A_i^{\rm opt}$ represents the selected combination of treatment, given which the potential outcome is maximized. In our setting, $\sigma$ is chosen to be $1.1$. In fact, it can be observed that the Bayes rule as defined in Equation \ref{Eq:bayesrule} is the sign of the $T_{k,i}$ for each $k$. In addition, based on Theorem \ref{thm2} and \ref{thm3}, our proposed loss is consistent with outcome weighted 0-1 loss in this setting.

In both DNN approaches, the activation function between hidden layer and output layer, or input layer and output layer is fixed to be a simple centered monotone transformation. The activation function between input layer and hidden layer in DNN-1hdd is chosen to be ReLU. $L_1$ penalty is applied to all the weights in both of DNN approaches.

To compare different methods, three scores are defined to quantify their performance. Two of the scores are based on misclassification rate. Let $\hat{A}_i$ be the predicted combination of treatments. Usually, the misclassification rate is defined as
\begin{equation}
MCR=\frac{1}{n}\sum_{i=1}^n 1\{\hat{A}_i\not = A_i^{\rm opt}\}.
\end{equation}
To account for the fact that $P(A_i^{\rm opt}=a)$ is not the same, for all $a\subset\{1,\cdots,K\}$, the average of proportion of misclassification rate is also proposed as following
\begin{equation}
AMCR=\frac{1}{2^K}\sum_{a\in\{-1,1\}^K}\frac{\sum_{i=1}^n 1\{\hat{A}_i\not=a, A_i^{\rm opt}=a\}}{\sum_{i=1}^n 1\{A_i^{\rm opt}=a\}}.
\end{equation}
To adjust MCR and AMCR based on the total number of combinations of treatments and make their performance comparable with binary classifier for binary classification problem, adjusted MCR and AMCR is calculated by $1-(1-MCR)^{1/K}$ and $1-(1-AMCR)^{1/K}$, respectively. Because in the framework of personalized medicine, the ultimate goal is not classification, but better clinical outcome. Thus, it is primary to consider the average benefit which defines as following
\begin{equation}
\label{Eq:AB}
AB=\frac{1}{n}\sum_{i=1}^n R_{i,\hat{A}_i}.
\end{equation}
For MCR (or adjusted MCR) and AMCR (or adjusted AMCR), lower is better. However, higher AB is preferred.

The simulation procedure works as follows. First, a training set and a validation set with the same sample size $n_{train}K$ are generated separately, and a sequence of candidate tuning parameters are pre-specified, in this case, 0.1, 0.01, 0.001, 0.0001. Then for each candidate tuning parameter, the Naive method, DNN-simple, and DNN-1hdd are trained on the training dataset and tested on the validation set, where we compute the misclassification rate on the validation dataset for each method. The tuning parameter gives the lowest misclassification rate on the validation set is selected, respectively for each method. At last, we evaluate the three scores (MCR, AMCR, AB) under the selected tuning parameter on testing dataset which is independently generated with sample size $n_{test}=10n_{train}K$. The above procedure is repeated 100 times. The mean and standard error (SE) of the scores are reported in Table \ref{tab2}. Figure \ref{Fig:comparison} shows a summary for adjusted scores and AB over 100 repeats. 

Overall, DNN-1hdd performs the best in terms of MCR (or adjusted MCR) and AB. Note that Bayes method is infeasible in practice, its superiority is because it takes advantage of the true model. The Bayes method serves as a reference to quantify the best possible performance we could get in this problem. Although DNN-simple can only produce linear decision rules, which don't include the true decision rule, its performance is still competitive with DNN-1hdd because linear decision rule can explain the true non-linear model to some extent, especially in small samples when the information is limited. The fact that DNN-1hdd allows more flexibility also introduces additional variation. In general, DNN-1hdd performs slightly better than DNN-simple with this bias and variance trade-off. In terms of AMCR (or adjusted AMCR), DNN-simple performs the best among all three methods. From Table \ref{tab2} and Figure \ref{Fig:comparison}, as the sample size increases, the misclassification decreases and average benefit increases for all methods, which agrees with our consistency result proved in Theorem \ref{thm2}. However, the decreasing in MCR (or adjusted MCR) becomes slower as sample size increases, because (1) the scores are lower bounded by Bayes rule; (2) the algorithm is terminated after a certain times of iterations and takes the final update as the minimizer. In practice, one possible strategy is to use a validation dataset to monitor the algorithm and pick the update with the best performance in validation dataset. %To straightforwardly show the comparison, we do not monitor the algorithm in this simulation since monitoring the algorithm may not make sense for simulation. The current procedure still shows the advantage of proposed methods over naive method, and method with hidden layers (DNN-1hdd) over that without hidden layers (DNN-simple).

\begin{center}
	\begin{table}[t]%
		\centering
		\caption{Comparison of naive method, DNN-simple, and DNN-1hdd by adjusted scores.\label{tab2}}%
		\begin{tabular*}{500pt}{@{\extracolsep\fill}ccccc@{\extracolsep\fill}}
			\toprule
			$n_{train}$ & Method & adjusted MCR & adjusted AMCR & AB\\
			\midrule
			200 & Naive &0.4708(0.0026) &0.4652(0.0012) &-1.5171(0.1004)\\
			200 & DNN-simple & 0.2824(0.0022)  &0.3555(0.0022)& 0.5765(0.1172)\\
			200 & DNN-1hdd & 0.2782(0.0028)  &0.3709(0.0025)& 0.6188(0.1200)\\ 
			200 & Bayes & 0.0657(0.0001)&0.1108(0.0027) &3.1173(0.1143)\\ \hline
			1000 & Naive &0.3517(0.0024) &0.3729(0.0012) &-0.0897(0.0948)\\
			1000 & DNN-simple & 0.2125(0.0016)  &0.3008(0.0028)& 1.3609(0.1153)\\
			1000 & DNN-1hdd & 0.2142(0.0015)  &0.3240(0.0041)& 1.3376(0.1145)\\ 
			1000 & Bayes & 0.0657(0.0001)&0.1111(0.0027) &3.0871(0.1133)\\ \hline
			4000 & Naive &0.3083(0.0029) &0.3396(0.0018) &0.3502(0.0883)\\
			4000 & DNN-simple & 0.1907(0.0014)  & 0.2902(0.0034) & 1.7257(0.1166)\\
			4000 & DNN-1hdd & 0.1806(0.0012) & 0.2991(0.0050) & 1.8435(0.1134)\\ 
			4000 & Bayes & 0.0656(0.0000)&0.1140(0.0029) &3.1965(0.1127)\\  \hline
			10000 & Naive &0.2996(0.0035) &0.3305(0.0015) &0.4934(0.0766)\\
			10000 & DNN-simple & 0.1846(0.0014)  & 0.2865(0.0033) &1.7272(0.1008)\\
			10000 & DNN-1hdd & 0.1617(0.0012) &0.2889(0.0063)  &1.9858(0.0986)\\ 
			10000 & Bayes & 0.0657(0.0001)&0.1111(0.0027) & 3.1272(0.0977)\\
			\bottomrule
		\end{tabular*}
	\end{table}
\end{center}
\begin{figure}
	\centering
	\includegraphics[height=9cm, width=12cm]{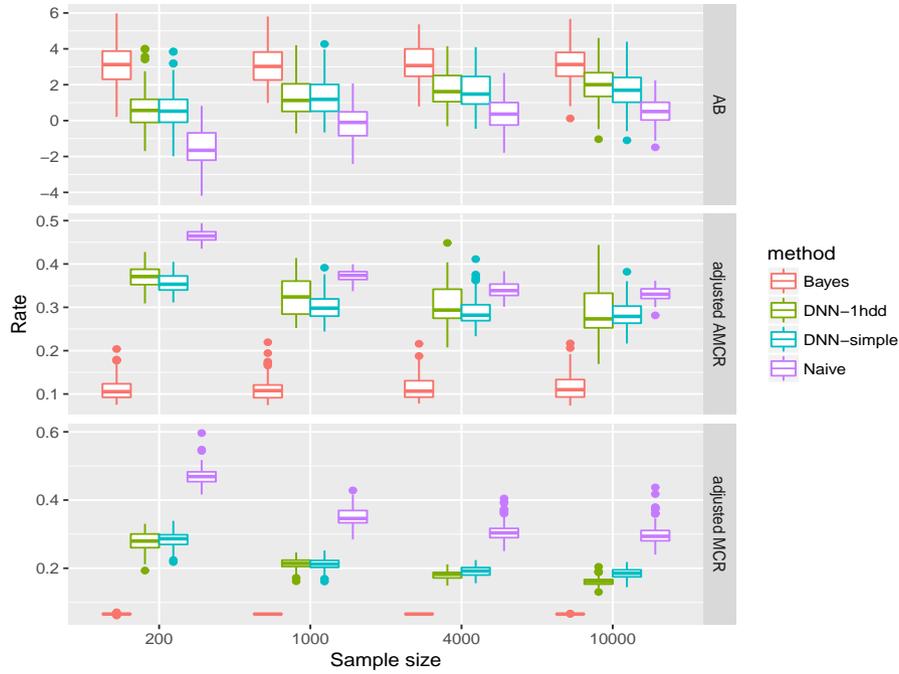}
	\caption{Comparison of naive method, DNN-simple, and DNN-1hdd by adjusted MCR, AMCR, and AB.\label{Fig:comparison}}
\end{figure}

\begin{figure}[ht]
	\centering
	\includegraphics[height=9cm, width=12cm]{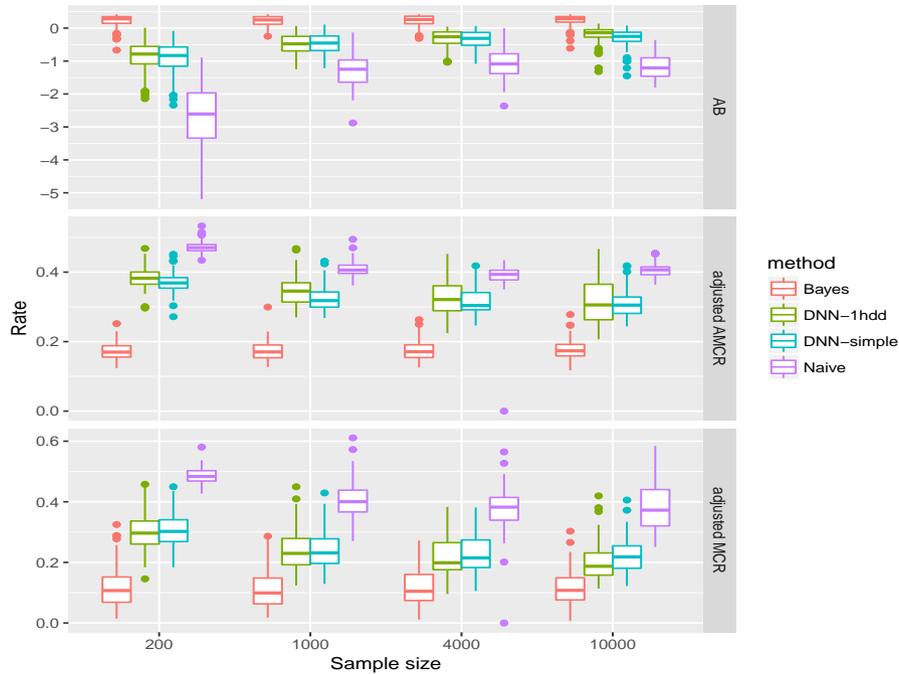}
	\caption{Comparison of naive method, DNN-simple, and DNN-1hdd by adjusted MCR, AMCR, and AB with model mis-specification.\label{Fig:comparison_mis}}
\end{figure}

\subsection{Simulation with model mis-specification}

In this section, we compare our two DNN approaches with a naive method when the condition in Theorem \ref{thm2} is not satisfied. The simulation procedure is the same as in Section \ref{Sec: correctmodel} except for the way to generate $R_{i, A_i}$. In this simulation, given $T_{k,i}$ and $M_i$, the potential outcome is generated by
\begin{equation}
R_{i,A_i}=\sum_{\{k: A_{ik}=1\}}T_{k,i}-\gamma (\sum_{\{k: A_{ik}=1\}}T_{k,i})^2+M_i+\sigma\epsilon_{i,A_i},
\end{equation}
where $\gamma=0.1$ and $\sigma=0.2$. In this setting, it is easy to see that $\sup_A|\gamma (\sum_{\{k: A_{ik}=1\}}T_{k,i})^2|\approx10$ and $\inf_k|T_{k,i}|\approx2$. Thus, the condition in Theorem \ref{thm2} is violated. The simulation procedure is the same as that described in Section \ref{subsec5.1}.

The averages of Adjust MCR, AMCR, and AB over 100 repeats with their standard deviations are reported in Table \ref{tab2_mis} and Figure \ref{Fig:comparison_mis}, which show that our proposed methods DNN-1hdd and DNN-simple outperform Naive method with respect to all three scores. Overall, DNN-1hdd has the best performance with respective to AB and adjusted MCR, while DNN-simple performs the best when considering the adjusted AMCR. Although the Naive method is consistent, it performs very bad under finite sample because of the high dimensionality of this problem. Our two DNN approaches still perform well when the small interaction assumption is violated.

\begin{center}
	\begin{table}[t]%
		\centering
		\caption{Comparison of naive method, DNN-simple, and DNN-1hdd by adjusted scores with model mis-specification.\label{tab2_mis}}%
		\begin{tabular*}{500pt}{@{\extracolsep\fill}ccccc@{\extracolsep\fill}}
			\toprule
			$n_{train}$ & Method & adjusted MCR & adjusted AMCR & AB\\
			\midrule
			200 & Naive &0.4863(0.0024) &0.4720(0.0017) &-2.6427(0.0921)\\
			200 & DNN-simple & 0.3048(0.0053)  &0.3703(0.0029)& -0.9197(0.0516)\\
			200 & DNN-1hdd & 0.2994(0.0060)  &0.3829(0.0028)& -0.8746(0.0491)\\ 
			200 & Bayes & 0.1157(0.0064)&0.1726(0.0027) &0.2124(0.0199)\\ \hline
			1000 & Naive &0.4074(0.0063) &0.4096(0.0022) &-1.2951(0.0470)\\
			1000 & DNN-simple & 0.2432(0.0063)  &0.3244(0.0036)& -0.4939(0.0308)\\
			1000 & DNN-1hdd & 0.2433(0.0067)  &0.3462(0.0043)& -0.4885(0.0307)\\ 
			1000 & Bayes & 0.1127(0.0065)&0.1732(0.0029) &0.2136(0.0164)\\ \hline
			4000 & Naive &0.3761(0.0070) &0.3898(0.0044) &-1.1061(0.0424)\\
			4000 & DNN-simple & 0.2323(0.0064)  & 0.3166(0.0040) & -0.3592(0.0267)\\
			4000 & DNN-1hdd & 0.2254(0.0070) & 0.3275(0.0051) & -0.3151(0.0247)\\ 
			4000 & Bayes & 0.1199(0.0066)&0.1761(0.0028) &0.2293(0.0161)\\  \hline
			10000 & Naive &0.3890(0.0084) & 0.4059(0.0020)& -1.1827(0.0370)\\
			10000 & DNN-simple &0.2210(0.0053) &0.3102(0.0037)&-0.3050(0.0279)\\
			10000 & DNN-1hdd &0.2016(0.0060)&0.3128(0.0060)&-0.1966(0.0247)\\ 
			10000 & Bayes &0.1146(0.0054) &0.1761(0.0027)& 0.2354(0.0172)\\ 
			\bottomrule
		\end{tabular*}
	\end{table}
\end{center}

\section{Real data analysis}\label{sec6}

In this section, we apply our method to an electronic health record (EHR) data for type 2 diabetes patients from Clinical Practice Research Datalink (CPRD). 1139 patients are included in the dataset. For each patient, 21 covariates are collected before treatment assignment, which include demographical variables such as gender, BMI, HDL, and LDL, and also indicator of complications such as stroke and hypertension. The primary endpoint is change in A1c. Because A1c typically drops after applying the treatments, the primary endpoint is always negative and smaller is preferable. Thus, the proposed method can be easily implemented by considering negative change in A1c as patient outcome.

In this dataset, 4 treatments are considered for each patient, DDP4, sulfonylurea (SU), metformin (MET), and TZD. These 4 treatments function via four different biological processes. DDP4 increases incretin levels, which inhibits glucagon release. SU increases insulin release from $\beta$-cell in pancreas. MET decreases glucose production by the liver and increase the insulin sensitivity of body tissue. TZD makes cells more dependent on oxidation of carbohydrates. These 4 treatments target on different cells or functional organs, so it is reasonable to assume little treatment interaction among them and additive treatment effects.

In this real dataset, $K=4$, resulting in $16$ possible combinations of treatments. Because the original average benefit score can not be directly calculated in real settings, the naive method, DNN-1hdd, and DNN-simple are evaluated under a weighted version of average benefit defined as follows, 
\begin{eqnarray}
T&=&\frac{\sum_{i=1}^{n} w_iR_i}{\sum_{i=1}^{n} w_i},\nonumber
\end{eqnarray}
where $w_i=\frac{I\{\hat{A}_i=a_i\}}{P(A_i=a_i|X_i)}$, $a_i$ is the observed treatment assignment, $R_i$ is the negative change in A1c, $\hat{A}_i$ is the predicted treatment assignment for the $i$th patient. Intuitively, $T$ is the weighted average of change in A1c over all the patients with the same treatment assignment as our predicted treatment assignment, which estimates the change in A1c if the fitted treatment assignment is adopted. If our proposed methods can recover the underlying optimal decision rule to some extent, their $T$ are expected to be lower than that of the naive method, indicating higher efficacy of our treatment recommendations. In addition, the number of patients $N$ satisfying $a_i=\hat{A}_i$ is also reported.

Multiple imputation is adopted to deal with missing data. Because of the randomness of the multiple imputation, $5$ different imputed datasets are analyzed following the same procedure, and then the scores from these $5$ imputed data are summarized by average. %This procedure is the same as adoption a randomized decision rule for treatment assignments. Although other methods to combine the results from multiple inputed data are possible, it is not the focus of this paper. 

For each imputed dataset, we do the following. Firstly, the whole dataset is randomly split into two datasets, training set and testing set. Training set contains $912$ patients, and testing dataset contains $227$ patients. All three methods are fitted on training set respectively, and the score for each of the method is calculated based on the testing set. This procedure is repeated $100$ times, each time the score is recorded. After summarizing these scores across $5$ imputed datasets, the mean of $T$, $N$ and their standard errors (SEs) are reported in Table~\ref{tab3}, and all results over $100$ repeats are shown by boxplots in Figure~\ref{Fig:realdata}.

It can be observed that both DNN-simple and DNN-1hdd have lower $T$ compared with naive method. Moreover, both DNN methods show significant effect on A1c while naive method does not. Comparing between two DNN methods, DNN-1hdd performs slightly better than DNN-simple in terms of $T$. In addition, $N$, which is the size of the subgroup with the same treatment assignment as predicted for our proposed methods, is much larger than that for naive method. Overall, our proposed methods have better performance than naive method, and DNN-1hdd is slightly better than DNN-simple.

\begin{center}
\begin{table}[t]
	\caption{Comparison of Naive method, DNN-simple, and DNN-1hdd by scores in real data example. $T$ is the weighted average of change in A1c over all the patients with the same treatment assignment as the method suggests. $N$ is the number of the patients whose treatment assignment coincide with the predicted.\label{tab3}}%
	\begin{tabular*}{500pt}{@{\extracolsep\fill}ccc@{\extracolsep\fill}}
		\toprule
		Method & $T$ & $N$ \\ \midrule
		Naive & -1.534(0.078) & 4.410(0.107)\\
		DNN-simple & -2.605(0.058) & 23.058(0.371) \\
		DNN-1hdd & -2.695(0.057) &  25.790(0.394)\\
		\bottomrule
	\end{tabular*}
\end{table}
\end{center}

\begin{figure}
	\centering
	\includegraphics[height=9cm, width=12cm]{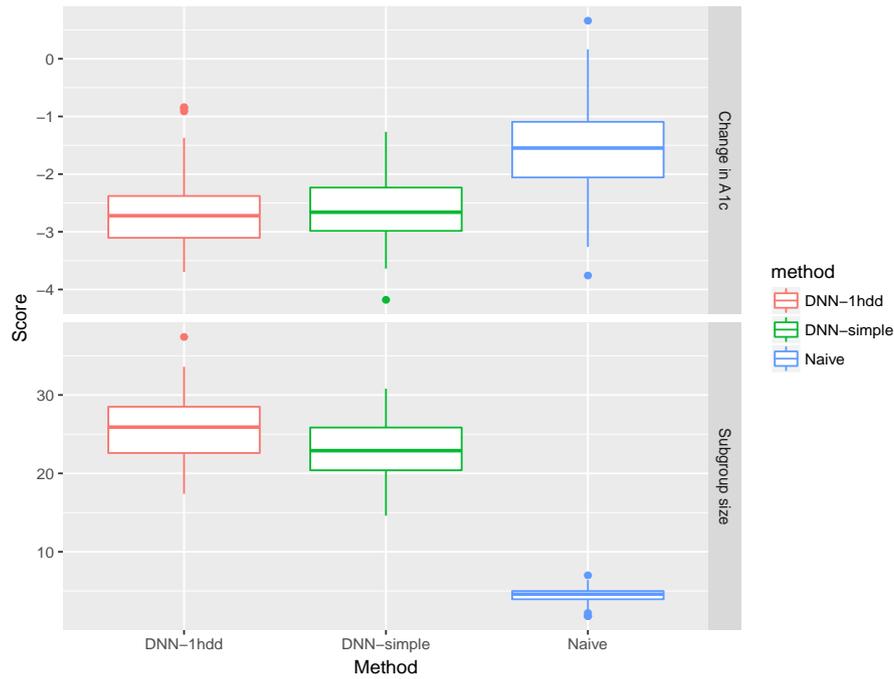}
	\caption{Boxplots of scores for Naive method, DNN-simple, and DNN-1hdd over 100 repeats .\label{Fig:realdata}}
\end{figure}

\section{Conclusions and discussions}\label{sec7}

In this paper, an outcome weighted deep learning framework is proposed to estimate optimal combination therapies. Both simulation and real data analysis provide solid evidence on the power of our proposed method. Essentially, the proposed loss, outcome weighted Hamming loss, can be applied to any occasion, even when deep learning is not a desired classifier. For example, linear classifiers can be used in the case when efficiency and convexity is very important. Although other nonlinear classifiers can also be adopted, when adopting these methods, advantages of deep learning such as sharing subspace may disappear unless inducing certain techniques such as dimension reduction. Deep learning framework can also be applied under other losses. For example, partial ranking loss proposed in \cite{Gao2013} can also be combined with deep learning approach. However, for ranking based loss function, one of the significant drawback is the lack of `zero' point to distinguish between good treatments and bad treatments. In other word, ranking based loss function can only provide a rank of treatments instead of recommending treatment. In this paper, the advantages of adopting deep learning approach and the loss function has been articulated . Our method enjoys all of these advantages.

Furthermore, the proposed method is critical for future research. It can be extended in multiple directions. The first possible extension focuses on loss functions. In previous sections, it is easy to observe that Hamming loss provides an approximation to the original 0-1 loss. As we have shown, this approximation is done by overlooking strong interactions among treatments. While original 0-1 loss is very flexible so that it is generally consistent, the huge number of parameters to be estimated may undermine the efficiency and lead to huge computation costs. Thus, a natural extension of our method is to design a family of loss functions such that our method can be adaptive to certain amount of interactions among treatments. For example, Hamming loss is the proportion of mis-classified labels, and can be rewritten into the following
\begin{equation*}
	1-\frac{1}{K}\sum_{k=1}^K 1\left\{A_k=D_k\right\}.
\end{equation*}
0-1 loss can also be rewritten into the following
\begin{equation*}
1-\prod_{k=1}^K 1\left\{A_k=D_k\right\}.
\end{equation*}
Naturally, a family of loss functions can be defined as
\begin{equation*}
1-\frac{1}{\binom{\tau}{K}}\sum_{\{k_1\cdots,k_{\tau}\}\subset\{1,\cdots,K\}} \prod_{q=1}^{\tau}1\left\{A_{k_q}=D_{k_q}\right\},
\end{equation*}
where $\tau$ is a parameter. We call it $\tau$th order Hamming loss. When $\tau=1$, it is the same as Hamming loss, which has the most strict conditions on treatment interactions in order to guarantee its fisher consistency. When $\tau=K$, it is actually 0-1 loss, which has the fewest conditions on treatment interactions in order to guarantee its fisher consistency. With the increasing of $\tau$, the loss function can accommodate more and more interactions, but may lead to more and more parameters to be estimated. Thus, the existence of $\tau$ allows us to choose loss function adaptively to the data. 

The second possible extension focuses on the dynamic assignment of multiple treatments. Combination therapies considered in this paper do not involve treatment transition problem. For the real data analysis and simulation, all the treatment are assumed to be applied to the patient at the same time. However, this is not always true in real life. Patients typically change from one treatment to another. Thus, instead of deciding a static treatment assignment, a dynamic treatment assignment with transition scheduling is a more realistic and reasonable solution to precision medicine. During this process, multiple outcomes may also be involved in the analysis. The process of single outcome over time may also play an important role in this extension. 

The third extension focuses on how to combine our proposed technique with methods in \cite{Zhang2012a} and \cite{Zhang2012b} such that we can directly estimate the contrasts of treatment effects. In this case, it is critical to model the potential outcomes and propensity scores in an efficient way, especially for combination therapies. In general, our work in this paper is the keystone to all these potentials.

One of the limitations in our proposed method is that our method may fail under large interactions between treatments. When the interactions between treatments share the same direction of treatment recommendations, our method still holds, even the condition in \ref{thm2} fails. For example, in the simulation with model mis-specification, if $\gamma\leq 0$, it can be shown that our method is still consistent but the condition in \ref{thm2} fails. When the interactions between treatments are large and have the opposite direction of treatment recommendations, for example, it is extremely harmful to take two good medications simultaneously. Each combination therapy should be considered as a totally new treatment whose effect is irrelevant to the treatment effects when taking medications separately. In this case, treatment recommendation should be considered as a multi-class classification problem rather than a multi-label classification problem, and no information can be borrowed to improve efficiency.

%\backmatter

\section*{Acknowledgments}

This work was one of research topics in Eli Lilly and Company summer intern program. All supports were provided by Eli Lilly and Company. Thank Yebin Tao for suggestions on this work.

\appendix

\section{Proof of Theorems\label{app1}}

\begin{proof}[Proof of Theorem~\ref{thm1}]
	Let us compute the minimizer for the proposed Hamming loss.
	\begin{eqnarray}
	\mathcal{R}_H(D)&= &E\left [\frac{R}{\pi_A}\frac{1}{K}\sum_{k=1}^{K}I\{A_k\not = D_k(X)\}\right] \nonumber\\
	&= &E\left\{\frac{1}{K}\sum_{k=1}^K \left[\sum_{\{a:a_k=1\}}E[R|A=a,X]I\{1\not = D_k(X)\}+\sum_{\{a:a_k=-1\}}E[R|A=a,X]I\{-1\not = D_k(X)\}\right]\right\}.\nonumber
	\end{eqnarray}
	It is easy to see that the minimizer of the proposed Hamming loss, $f=(f_1,\cdots,f_K)\in \{-1,1\}^K$, can be written as
	$f_k=\sign\left\{\sum_{\{a:a_k=1\}}E[R|A=a,X]-\sum_{\{a:a_k=-1\}}E[R|A=a,X]\right\}, \forall k\in\{1,\cdots,K\}$. Thus, the conclusion follows given the condition that $D_k^*(X)=\sign\left\{\sum_{\{a:a_k=1\}}E[R|A=a,X]-\sum_{\{a:a_k=-1\}}E[R|A=a,X]\right\}$.
\end{proof}

\begin{proof}[Proof of Theorem~\ref{thm2}]
	We claim that the $D^*(X)$ in this case is the combination of all the treatments with $T_{e_k}(X)>0$. For any other decision rule $D(X)\not = D^*(X)$,
	\begin{align*}
	&E[R|A=D^*(X), X]-E[R|A=D(X), X]\\
	=&\sum_{\{k:T_{e_k}(X)>0\}}T_{e_k}(X) + r_{D^*}(X) - \sum_{\{k:D_k(X)=1\}}T_{e_k}(X) - r_{D}(X)\\
	\geq & \inf_k |T_{e_k}(X)|-2\sup_a|r_a(X)|\\
	>&0.
	\end{align*}
	Thus, our claim is true, $D^*_k(X)=\sign(T_{e_k})$. Now, let us compute the following.
	\begin{align*}
	&\sum_{\{a:a_k=1\}}E[R|A=a,X]-\sum_{\{a:a_k=-1\}}E[R|A=a,X]\\
	=&\sum_{\{a:a_k=1\}}\left\{\sum_{\{k':a_{k'}=1\}}T_{e_{k'}}(X) + r_a(X)\right\} - \sum_{\{a:a_k=-1\}}\left\{\sum_{\{k':a_{k'}=1\}}T_{e_k'}(X) + r_a(X)\right\}\\
	\geq&\sum_{\{a:a_k=1\}}\left\{\sum_{\{k':a_{k'}=1, k\not=k'\}}T_{e_{k'}}(X)\right\}-\sum_{\{a:a_k=-1\}}\left\{\sum_{\{k':a_{k'}=1, k\not=k'\}}T_{e_{k'}}(X)\right\}\\
	&+2^{K-1}T_{e_k}(X)-2^K\sup_a|r_a(X)|\\
	=&2^{K-1}[T_{e_k}(X)-2\sup_a|r_a(X)|]
	\end{align*}
	Thus, $\sign\left\{\sum_{\{a:a_k=1\}}E[R|A=a,X]-\sum_{\{a:a_k=-1\}}E[R|A=a,X]\right\}=\sign(T_{e_k})$. The conclusion follows.
\end{proof}

\begin{proof}[Proof of Theorem~\ref{thm3}]
	Let us compute
	\begin{eqnarray}
	\Phi_H(\tilde{D})&= &E\left[\frac{R}{\pi_A}\frac{1}{K}\sum_{k=1}^K\phi(A_k\tilde{D}_k(X))\right] \nonumber\\
	&= &E\left\{\frac{1}{K}\sum_{k=1}^K \left [\sum_{\{a:a_k=1\}}E[R|A=a,X]\phi(\tilde{D}_k(X)) +\sum_{\{a:a_k=-1\}}E[R|A=a,X]\phi(-\tilde{D}_k(X))\right ]\right\}. \nonumber
	\end{eqnarray}
	Let $\eta = \frac{\sum_{\{a:a_k=1\}}E[R|A=a,X]}{\sum_{\{a:a_k=1\}}E[R|A=a,X] + \sum_{\{a:a_k=-1\}}E[R|A=a,X]}$, by \cite{Bartlett2006}, the conclusion follows.
\end{proof}

\nocite{*}% Show all bib entries - both cited and uncited; comment this line to view only cited bib entries;
\bibliographystyle{wileyNJD-AMA}
\bibliography{ref}%

\clearpage
\end{document}